\documentclass[10pt,twocolumn,twoside]{IEEEtran}

\usepackage{calc}

\usepackage{cite}
\usepackage{graphicx,subfigure}
\usepackage{psfrag}
\usepackage{amsmath,amssymb}
\interdisplaylinepenalty=2500

\def\reals{{\mathbb{R}}}




\def\bfw{{\boldsymbol{w}}}

\def\bfx{{\boldsymbol{x}}}
\def\bfxh{{\hat{\boldsymbol{x}}}}

\def\bfy{{\boldsymbol{y}}}


\def\bfH{{\boldsymbol{H}}}

\def\bfM{{\boldsymbol{M}}}

\def\bfT{{\boldsymbol{T}}}




\def\Xset{\mathcal{X}}




\newcommand{\prob}[1]{\mathrm{P}\left(#1\right)}

\DeclareMathOperator*{\dotleq}{\overset{.}{\leq}}
\DeclareMathOperator*{\dotgeq}{\overset{.}{\geq}}
\DeclareMathOperator*{\defeq}{\triangleq}

\newtheorem{theorem}{Theorem}
\newtheorem{corollary}{Corollary}[theorem]

\newtheorem{lemma}{Lemma}
\newtheorem{definition}{Definition}


\def\ml{\mathrm{ML}}
\def\lat{\mathrm{L}}

\def\X{\mathcal{X}}
\def\D{\mathcal{D}}
\def\P{\mathcal{P}}

\def\XD{\mathcal{X},\!\mathcal{D}}
\def\XDP{\mathcal{X},\!\mathcal{D},\!\mathcal{P}}

\newcommand{\beq}{\begin{equation}}
\newcommand{\eeq}{\end{equation}}

\newcommand{\ba}{\begin{array}}
\newcommand{\ea}{\end{array}}
\newcommand{\bea}{\begin{eqnarray}}
\newcommand{\eea}{\end{eqnarray}}
\newcommand{\beqn}{\begin{equation*}}
\newcommand{\eeqn}{\end{equation*}}
\newcommand{\bean}{\begin{eqnarray*}}
\newcommand{\eean}{\end{eqnarray*}}

\newcommand{\limrho}{\lim\limits_{\rho\rightarrow \infty}}

\begin{document}
\sloppy
\title{Fundamental Rate-Reliability-Complexity Limits in Outage Limited MIMO Communications}
\author{
\thanks{The research leading to these results has received funding from the European Research Council under the European Community's Seventh Framework Programme (FP7/2007-2013) / ERC grant agreement no. 228044. P. Elia acknowledges funding by EU:FP7/2007-2013 grant no. 216076 (SENDORA), by NEWCOM++ (contract n. 216715), and by Mitsubishi RD project Home-eNodeBS. The work of J. Jald{\'e}n was supported by the SSF grant ICA08-0046.}
Petros Elia\thanks{P. Elia is with the Mobile Communications Department, EURECOM, Sophia Antipolis, France (email: elia@eurecom.fr)} and Joakim Jald{\'e}n\thanks{J. Jald{\'e}n is with the ACCESS Linnaeus Center, KTH Signal Processing Lab, Royal Institute of Technology, Stockholm, Sweden (email:jalden@kth.se)} }

\maketitle
\begin{abstract}
The work establishes fundamental limits with respect to rate, reliability and computational complexity, for a general setting of outage-limited MIMO communications.  In the high-SNR regime, the limits are optimized over all encoders, all decoders, and all complexity regulating policies.
The work then proceeds to explicitly identify encoder-decoder designs and policies, that meet this optimal tradeoff.
In practice, the limits aim to meaningfully quantify different pertinent measures, such as the optimal rate-reliability capabilities per unit complexity and power, the optimal diversity gains per complexity costs, or the optimal number of numerical operations (i.e., flops) per bit.
Finally the tradeoff's simple nature, renders it useful for insightful comparison of the rate-reliability-complexity capabilities for different encoders-decoders.
\end{abstract}
\begin{IEEEkeywords}
Diversity-multiplexing tradeoff, lattice designs, rate-reliability-complexity, multiple-input multiple-output (MIMO), space-time coders-decoders, fundamental limits, lattice reduction, regularization.
\end{IEEEkeywords}

\section{Introduction}
\subsection{General system model}
We consider the general multiple-input multiple-output (MIMO) communications setting,
where the $m\times 1$ vector representation of the received signal $\bfy$ is given by
\beq \label{eq:generalMIMO}
\bfy=\bfH\bfx+\bfw,
\eeq
where $\bfx$ is the $n\times 1$ vector representation of the coded transmitted signals, $\bfH$ the $m\times n$ channel matrix, and where $\bfw$ represents additive noise.  $\bfH$ is considered to be random, having an arbitrary distribution, and being parameterized by $\rho$ which is interpreted as the SNR (cf. \cite{JE:09b}).  $\bfw$ is taken to be i.i.d.\ Gaussian with fixed variance.  We assume that one use of \eqref{eq:generalMIMO} corresponds to $T$ uses of some underlying ``physical'' channel.

The model applies to several network topologies and scenarios, such as \textsc{mimo, mimo-ofdm, mimo-mac, mimo-arq}, and cooperative communications, where each such scenario endows $\bfH$ and $\bfx$ with different structures, dimensionalities and statistics.  This work specifically considers the non-ergodic, outage-limited setting, in which the above MIMO-related scenarios play a crucial role in improving the error and rate performance, though usually at the expense of much higher encoding-decoding computational complexity.

\subsection{Motivation and general results}
Error performance and encoding-decoding complexity in telecommunications (cf. \cite{ZT:03,Tel:99,AEV:02,BK:98,Mic:01}), are widely considered to be two limiting, and interrelated bottlenecks.  Joint exposition of these two aspects becomes increasingly necessary, in order to meaningfully quantify the ever increasing complexity costs of reliable communication, in systems that progressively become larger and more dynamic.

A natural question then relates to establishing and meeting joint fundamental error-performance and complexity limits, optimized over all choices of encoders, decoders and policies.
Such limits will be here described, under a high SNR approximation, in the form of an optimal rate-reliability-complexity tradeoff for MIMO communications.

The limits provide answers on the pertinent measure
\[\bigl(\text{\textsc{snr}, rate, reliability, complexity}\bigr),\]
and do so within approximation factors which, for increasing $\rho$, vanish to be smaller than any $\rho^\epsilon$, for any $\epsilon>0$.  Specifically these answers pertain to the following.
\begin{itemize}
\item Description of the best achievable \textsc{snr}-rate-reliability-complexity combination, optimized over all transceivers and policies (Theorem~\ref{thm:optimalPair}).
\item Description of the union of all achievable \textsc{snr}-rate-reliability-complexity combinations. (Corollary~\ref{thm:achievableRegion}).
\item Description of the optimal value achieved by a large family of utility measures which quantify the \textsc{snr}-rate-reliability-complexity capabilities of transceivers, and which are decreasing functions of complexity and of error probability.  (Corollary~\ref{thm:OptimalUtility}).
\end{itemize}

\subsection{Structure of paper}
Section~\ref{sec:transceiverDesign} recalls the general transceiver setting, and defines the different performance measures.
Section~\ref{sec:exponents} introduces the asymptotic measures of performance, directly applying the \emph{diversity multiplexing tradeoff} (DMT, \cite{ZT:03}) as the pertinent asymptotic measure of rate-error performance, and defining the \emph{worst-case complexity exponent} as a high-SNR asymptotic measure of the worst-case complexity of (reasonable) transceivers.  Section~\ref{sec:tradeoff} presents the optimal high-SNR rate-reliability-complexity tradeoff, and the optimal transceiver utility value in its general form, as well as in its simpler, more specific,  \emph{homogeneous} variant.  Finally Section~\ref{sec:conclussions} concludes.

\section{Transceiver design and decoding policy: \\\textsc{snr}, rate, reliability and complexity\label{sec:transceiverDesign}}

\subsection{Transceiver design and decoding policy}

Consider a sequence of transceiver designs $\X_\rho,\D_{\rho}$, parameterized by $\rho$, where $\X_\rho\subset \reals^{n}$ denotes the codebook that maps information into transmitted signals, and where $\D_\rho$ denotes the decoder(s) that extract information from the received signals.
Let the transmitted codewords $\bfx$ be picked, with uniform probability, from the codebook $\X_{\rho} $.  Transmission has duration $T$, SNR $\rho$, rate \[R = \frac{1}{T} \log_{2} |\Xset_\rho|,\] and an enforced power constraint such that \beq\label{eq:powerConstraint}
\frac{1}{|\X_\rho|}\sum\limits_{\bfx\in \X_\rho}\|\bfx\|^2=T.
\eeq
For simplicity we write $\XD$, and we let the parameterization be implied.

Consider a policy $\P$ (short for $\P_{\rho,\!\XD}$), which generally trades-off error performance with complexity, by forcing the decoder to limit the number of numerical operations (i.e., flops), up to a maximum designated number of flops.  Once this limiting number of flops is reached, the decoder quits and declares an error. This limiting number of flops may or may not be chosen as a function of the instantaneous $\bfH, \bfy$, and will generally depend on $\rho$.

\subsection{Rate, reliability and complexity}
The error probability $P_{{\XDP}}$ introduced by the specific $\XDP$, is simply
\beq \label{eq:errorProbability}
P_{{\XDP}} := \prob{\bfH,\bfx,\bfw \ : \  \bfxh_{\XDP} \neq \bfx},
\eeq
where $\bfxh_{\XDP}$ denotes the vector decoded by $\D$, under the restrictions of $\P$.
For a given $\XDP$ and a given realization of problem inputs $\bfH,\bfx,\bfw$, then $N_{\XDP}(\bfH,\bfx,\bfw)$ will denote the overall instantaneous introduced complexity, in flops.  Then worst-case complexity is simply given by
\beq \label{eq:errorProbability}
C_{{\XDP}} := \sup_{\bfH,\bfx,\bfw}N_{\XDP}(\bfH,\bfx,\bfw).
\eeq
A pertinent measure of performance for any $\XDP$ then becomes the corresponding set of achievable combinations $(\rho, R, P_{{\XDP}},C_{{\XDP}})$, or its equivalent one-to-one re-mapping \[\bigl(\rho, R, \frac{\log P_{{\XDP}}}{\log Z},\frac{\log C_{{\XDP}}}{\log L}\bigr),\] where $Z,L$ regulate the refinements of the representation.

\section{Error and complexity exponents\label{sec:exponents}}

\subsection{Quantifying error performance: DMT}
As a measure of rate-reliability performance, we adopt the refinement of the diversity-multiplexing tradeoff, identified by Zheng and Tse in \cite{ZT:03}, as a fundamental performance limit in outage-limited MIMO communications.

In this setting, both the error probability $P_{{\XDP}}$ introduced by the specific $\XDP$, as well as the cardinality of $\X$, are parameterized by $Z=\rho$.  Specifically the code cardinality \[|\X|=2^{RT},\] is described by the \emph{multiplexing gain}
\beq \label{eq:multiplexing-gain}
r \defeq \lim_{\rho \rightarrow \infty} \frac{R}{\log_{2} \rho} = \lim_{\rho \rightarrow \infty} \frac{1}{T} \frac{\log |\X|}{\log \rho} \, ,
\eeq
and the associated error performance delivered by the transceiver and policy, is described by
the limiting behavior of $\frac{\log P_{{\XDP}}}{\log \rho}$, i.e., by the \emph{diversity gain}~\cite{ZT:03}
\bea
\label{eq:dmtXDP}d_{\XDP}(r)\!\!\!\!&:=&\!\!\!\!-\limrho \frac{\log P_{{\XDP}}}{\log\rho}.
\eea

\subsection{Regulating and quantifying complexity performance: worst-case complexity exponent}
We now consider the one-to-one mapping 
\[C_{{\XDP}} \ \leftrightarrow \ \frac{\log C_{{\XDP}}}{\log L},\] where $L$ is a properly chosen scaling factor, being for example a function of $|\X|$.  Seeking for patterns and insight, we move to asymptotics where a general asymptotic worst-case complexity measure then takes the form
\beqn
\limrho\frac{\log C_{{\XDP}}}{\log L}.
\eeqn
Similar to the DMT in \cite{ZT:03} which measures the high-SNR $P_{\XDP}$ as a polynomial power of $\rho$, the currently chosen measure of complexity will also be an exponent over $L=\rho$, 
taking the form
\beq \label{eq:complexityExponent}
c_{\XDP}(r) : =   \limrho \frac{\log C_{\XDP}}{\log\rho}.
\eeq
We note that the chosen worst-case complexity exponent $c_{\XDP}(r)$ keeps in line with the relevant behavior of most known transceivers, uniformly covering the full complexity range 
\[0\leq c_{\XDP}(r)\leq rT \]
of all reasonable transceivers, with $c_{\XDP}(r)=0$ corresponding to the fastest possible transceiver (requiring a small fixed number of flops per codeword), and with $c_{\XDP}(r)= rT$ corresponding to the slowest, full-search uninterrupted ML decoders\footnote{We here note that strictly speaking, $\XDP$ may potentially introduce a complexity exponent larger than $rT$.  In such a case though, $\XDP$ may be substituted by a lookup table implementation of $\X$ and an unrestricted ML decoder.  This encoder-decoder will jointly introduce a worst case complexity that is a constant multiple of $|\X|\doteq \rho^{rT}$.  It is noted that the number of flops per visited codeword is independent of $\rho$.} in the presence of a canonical code with multiplexing gain $r$, i.e., with $|\X| \doteq \rho^{rT}$.  In the above, the $\doteq$ notation is used when $f(\rho) \doteq \rho^{x}$ iff (cf. \cite{ZT:03})
\begin{equation} \label{eq:doteq}
\lim_{\rho \rightarrow \infty} \frac{\log f(\rho)}{\log \rho} = x \, ,
\end{equation}
and the symbols $\dotgeq$ and $\dotleq$ are defined similarly.

We also note that both $d_{\XDP}(r)$ and $c_{\XDP}(r)$ quantify worst-case (non-ergodic) behavior, and they are both set by the structural properties of the design $\XDP$ as well as the statistical properties of $\bfH,\bfx,\bfw$.

We proceed to find the optimal $\text{opt}_{\XDP} \bigl(d_{\XDP}(r),c_{\XDP}(r)\bigr)$ which is equivalent to finding the optimal $\text{opt}_{\XDP} \bigl(\rho, R, P_{{\XDP}},C_{{\XDP}}\bigr)$ up to a factor that vanishes to a number smaller than any $\rho^\epsilon$, for any $\epsilon>0$, in the limit of high-SNR.

\section{Performance-complexity tradeoff\label{sec:tradeoff}}

We proceed to establish the fundamental limits, optimized over all achievable \textsc{snr}-rate-reliability-complexity combinations of any transceiver and policy, up to a factor that vanishes in the limit of high $\rho$.

Towards this we describe the decoder and encoder structures, that together with a specific policy, meet a natural upper bound to this tradeoff, for all values of $r$.  We start with the decoder, but for now disregard the policy.

\subsubsection{The candidate decoder -- the DMT optimal LLL based LR-aided, regularized linear decoder}
We focus on the efficient and DMT optimal, LLL-based lattice-reduction (LR)-aided regularized linear decoder, presented in its general form in \cite{JE:09a,EJ:10a,JE:09b} for different settings, drawing from works such as \cite{YW:02},\cite{WF:03}.  We clarify that the decoder applies to lattice codes, and for completeness recall the decoder's three main steps.  In the first step, the decoder performs \emph{regularization }via MMSE-GDFE like preprocessing, thus inducing a regularized metric (cf.\cite{JE:09b})
\beq\label{eq:lattice-decoder}
\bfxh_{\lat} = \arg \min_{\bfxh \in \Lambda_{r}}  \|Ê\bfy - \bfH \bfxh \|^2 + \|Ê\bfxh \|^2_{\bfT}.
\eeq
In the above, $\Lambda_{r}$ is the scaled lattice corresponding to the code, and $\bfT$ is a positive definite matrix.  The above metric penalizes far away elements of $\Lambda_{r}$ that are generally non codewords.
The second step includes \emph{lattice-reduction} using the LLL algorithm \cite{LLL:82}, and the last step is an efficient \emph{linear detection }using, for example, the \emph{rounding off} algorithm.

Under standard assumptions on continuity, and in the presence of a policy $\P_{rT}$ that lets the decoder run its course irrespective of the complexity, the above decoder was shown in \cite{JE:09b} to be DMT optimal, i.e., that
\[ d_{\X,\D_{\text{LRR}},\P_{rT}}(r) = \sup\limits_{\D}d_{\X,\D,\P_{rT}},\]
irrespective of the lattice design $\X$, and irrespective of the fading statistics.

It is the case though that the decoder's LLL step introduces worst-case complexity that is infinite \cite{JSM:08}.  This problem is successfully addressed by the policy discussed below.
\subsubsection{The LR-based policy $\P_{\textsc{lr}}$}To limit the above infinite complexity, the work in \cite{JE:09b} proposed a policy that capitalizes on the fact that to achieve DMT optimality, it is not required to LLL reduce every conceivable channel.  Instead, in the event that too many flops occur, the policy instructs the implementation of the LLL algorithm to halt, and the decoder to declare an error.  Special emphasis is given to guaranteeing that the event of halting is not more common than the event of error, thus avoiding degradation of the asymptotic error performance.
Specifically the halting policy, to be denoted as $\P_\textsc{lr}$, was defined on the basis of the bound on the number $K$ of LLL cycles that are necessary for reduction of matrix $\bfM$ which generates the composite code-channel lattice.  This bound is given by \cite{JSM:08,DV:94} to be
\begin{equation} \label{eq:LLL-bound}
K \leq n^2 \log_{\frac{2}{\sqrt{3}}} \kappa(\bfM) + n,
\end{equation}
where $\kappa(\bfM)$ denotes the 2-norm condition number of $\bfM$.  Based on this bound, $\P_\textsc{lr}$ deploys the LLL algorithm only if
\beq \label{eq:LRboundFlops}\kappa(\bfM) \leq \rho^{\frac{1}{2}(d_{\ml}(r) + 1)+\epsilon}, \  \ \epsilon>0,\eeq  where $d_{\ml}(r)\defeq d_{\X,\D_{\text{ML}},\P_{rT}}(r)$ describes the DMT achieved by the uninterrupted ML decoder.  By showing that $$\prob{\kappa(\bfM) \geq \rho^{\frac{1}{2}(d_{\ml}(r) + 1)+\epsilon} } \dotleq \rho^{-d_{\ml}(r)},$$ i.e., that the event of halting is less common than the event of error under full ML decoding, it was proven in \cite{JE:09b} that, over any range of multiplexing gains $r$, the combination of $\D_{\textsc{lrr}}$ and $\P_{\textsc{lr}}$ achieves DMT optimal decoding of any lattice design $\X_\Lambda$, and does so with worst-case complexity of $O(\log \rho)$.  This implies a worst-case complexity that is at most linear in the rate\footnote{The result is extended in \cite{EJ:10a} to the MIMO-MAC case, to show that this optimality holds with worst-case complexity that is at most linear in the users' sum-rate.}, at high SNR.  It also constitutes substantial improvement over sphere decoding implementations where the worst-case complexity reported (see for example \cite{BHV:09} for fast decodable codes \cite{TK:02,PGG:07,SF:07}) is also exponential in $R$, albeit with a smaller exponent than full search.

\subsubsection{The overall worst-case complexity exponent jointly introduced by lattice encoding, $\D_{\textsc{lrr}}$ and $\P_{\textsc{lr}}$}

With the above in mind, we proceed to establish the overall computational complexity jointly introduced by lattice encoding and by the different components of $\D_{\textsc{lrr}}$, in the presence of $\P_{\textsc{lr}}$.

\paragraph{Decoder and policy}
We first quickly note that the regularization and linear-decoding steps, introduce complexity that is essentially independent of $\rho,\bfH$, and bounded above by $O(n^2)$, thus inducing a zero complexity exponent.

Regarding the lattice reduction step, we recall the hard bound
\bean
K & \leq & n^2 \log_{\frac{2}{\sqrt{3}}} \kappa(\bfM) + n \\ & \dot \leq & n^2 \log_{\frac{2}{\sqrt{3}}} \rho^{\frac{1}{2}(d_{\ml}(r) + 1)+\epsilon}+n, \ \ \epsilon>0,\eean
on the number of LLL flops enforced by $\P_{\textsc{lr}}$.  This bound implies that
\[\exists z\in \mathbb{R}^+:\prob{N(\bfH,\bfx,\bfw)>z\log\rho}= 0,\] which in turn means that
\beq \label{eq:LLLbound1}\prob{N(\bfH,\bfx,\bfw)>\rho^{c}}\doteq \rho^{-\infty},  \ \forall c>0.\eeq

In conjunction with the equivalent representation (drawing from \cite{JE:10b}, which presents some $c(r)$ of different $\XDP$)
\beqn \label{eq:complexityExponent2}
c(r)  =  \sup \{c \! : \! -\limrho \frac{\log P( N(\bfH,\bfx,\bfw)\geq \rho^{c} )}{\log\rho} \leq d(r)\}\eeqn
of a worst-case complexity exponent $c(r)$ that allows for $d(r)$, we conclude that the LLL algorithm under $\P_\textsc{lr}$, also introduces an effective complexity exponent equal to zero.  Consequently the entire $\D_{\textsc{lrr}},\P_{\textsc{lr}}$ introduces a minimal complexity exponent, equal to zero.
\paragraph{Lattice encoding}
Moving on to encoding, it is again easy to see that any lattice code $\X_\Lambda$ comes with encoding complexity that is bounded as $O(n^2)$, thus minimally adding to the overall complexity exponent of any transceiver/policy.

We are now able to combine the complexities from the encoder and the decoder, and to provide the following.
\begin{lemma}\label{lem:LRRoptimalityComp}
A lattice code $\X_\Lambda$, in conjunction with the decoder-policy $\D_\textsc{lrr},\P_\textsc{lr}$, jointly accept a minimum, over all encoders, decoders and policies, effective complexity exponent, i.e.,
\beq \label{eq:LRRptimalityComp}c_{\X_\Lambda,\D_\textsc{lrr},\P_\textsc{lr}}=\inf\limits_{\X,\D,\P}c_{\X,\D,\P} =0.
\eeq
\end{lemma}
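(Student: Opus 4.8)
The plan is to prove the two halves of \eqref{eq:LRRptimalityComp} separately: a converse bound $\inf_{\X,\D,\P}c_{\X,\D,\P}\ge 0$ valid for \emph{every} transceiver and policy, and an achievability statement that the particular triple $\X_\Lambda,\D_\textsc{lrr},\P_\textsc{lr}$ meets the value $0$. The converse is immediate: any decoder, under any policy, must expend at least a bounded, strictly positive number of flops (if only to read $\bfH,\bfy$ and emit a decision), so $C_{\XDP}\ge 1$ and $\frac{\log C_{\XDP}}{\log\rho}\ge 0$ for all $\rho>1$; taking $\rho\to\infty$ in \eqref{eq:complexityExponent} gives $c_{\X,\D,\P}\ge 0$, hence $\inf_{\X,\D,\P}c_{\X,\D,\P}\ge 0$. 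The same bound holds verbatim for the alternative, effective-exponent characterization recalled above, since that quantity can only be larger.

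For achievability I would collect the per-stage flop counts already established above. Lattice encoding of $\X_\Lambda$ costs $O(n^2)$ flops; the MMSE-GDFE-type regularization and the linear rounding-off detection each cost a number of flops polynomial in $n$ and independent of $\rho$ and $\bfH$; and, crucially, the LLL stage under $\P_\textsc{lr}$ halts after at most $O(\log\rho)$ cycles (the bound \eqref{eq:LLL-bound} evaluated at the threshold \eqref{eq:LRboundFlops}, i.e.\ at most $n^{2}\log_{2/\sqrt 3}\rho^{\frac12(d_{\ml}(r)+1)+\epsilon}+n$ cycles), each a bounded number of flops. Summing the four contributions gives $C_{\X_\Lambda,\D_\textsc{lrr},\P_\textsc{lr}}=O(\log\rho)$, so by \eqref{eq:complexityExponent},
\[
c_{\X_\Lambda,\D_\textsc{lrr},\P_\textsc{lr}}=\limrho\frac{\log O(\log\rho)}{\log\rho}=0 ,
\]
which together with the converse establishes \eqref{eq:LRRptimalityComp}.

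The only point that needs a word of care is that this vanishing exponent is also $0$ in the stronger, effective sense: by \eqref{eq:LLLbound1} we have $\prob{N(\bfH,\bfx,\bfw)>\rho^{c}}\doteq\rho^{-\infty}$ for every $c>0$, so budgeting $\rho^{c}$ flops and declaring an error beyond that incurs no additional error event asymptotically and, in particular, preserves the DMT $d_{\ml}(r)$ that $\D_\textsc{lrr},\P_\textsc{lr}$ already achieves --- precisely the DMT-transparency that the halting estimate $\prob{\kappa(\bfM)\ge\rho^{\frac12(d_{\ml}(r)+1)+\epsilon}}\dotleq\rho^{-d_{\ml}(r)}$ was designed to guarantee. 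Reconciling the two notions of complexity exponent is therefore the only delicate step, and it is handled entirely by \eqref{eq:LLLbound1}; everything else is routine bookkeeping of $\rho$-independent flop counts.
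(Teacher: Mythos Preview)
Your proposal is correct and follows essentially the same approach as the paper: the paper's argument is the discussion immediately preceding the lemma, which tallies the $\rho$-independent $O(n^{2})$ costs of regularization, linear detection, and lattice encoding, then uses the $O(\log\rho)$ LLL bound under $\P_\textsc{lr}$ together with \eqref{eq:LLLbound1} and the effective-exponent representation to conclude the LLL contribution is also zero. Your only addition is to make the trivial converse $\inf_{\X,\D,\P}c_{\X,\D,\P}\ge 0$ explicit, which the paper leaves implicit.
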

\subsubsection{The overall error performance}
With respect to the error performance of $\D_\textsc{lrr},\P_\textsc{lr}$, we utilize the result in \cite{JE:09b} which proves that the DMT optimality of $\D_\textsc{lrr},\P_\textsc{lr}$, holds irrespective of the lattice code that it is applied to, i.e., that for \emph{any} fixed lattice code $\X_\Lambda$, then
\beq \label{eq:LRRptimalityDMT}d_{\X_\Lambda,\D_\textsc{lrr},\P_\textsc{lr}}(r)=\sup\limits_{\D,\P}d_{\X_{\Lambda},\D,\P}(r).
\eeq
Disregarding for now issues on code design, we proceed to formalize the performance-complexity optimality of $\D_{\textsc{lrr}}, \P_{\textsc{lr}}$.
\subsubsection{The overall effective complexity/error exponent jointly induced by lattice encoding, $\D_{\textsc{lrr}}$ and $\P_{\textsc{lr}}$}

Combining \eqref{eq:LRRptimalityComp} and \eqref{eq:LRRptimalityDMT} gives the following.
\begin{lemma}\label{lem:LRRoptimalityDMCT}
The high-SNR rate-reliability-complexity tradeoff achieved by the $\D_\textsc{lrr},\P_\textsc{lr}$, is better or asymptotically equal to the tradeoff achieved by any other decoder-policy, irrespective of the lattice code $\X_\Lambda$ applied, i.e.,
\begin{multline}\label{eq:LRRptimalityDMCT}
\bigl(d_{D_{\X_{\Lambda},\D_\text{LRR},\P_\text{LR}}}(r),c_{D_{\X_{\Lambda},\D_\text{LRR},\P_\text{LR}}}(r)\bigr) \nonumber \\ = \bigl(\sup\limits_{\D,\P} d_{\X_\Lambda,\D,\P}(r),\inf\limits_{\D,\P} c_{\X_\Lambda,\D,\P}(r)\bigr).\end{multline}
\end{lemma}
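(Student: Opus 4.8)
The plan is to obtain Lemma~\ref{lem:LRRoptimalityDMCT} as an essentially immediate consequence of the two preceding results, after first pinning down the partial order that the phrase ``better or asymptotically equal'' refers to. Concretely, I would declare a decoder-policy with exponents $(d_1(r),c_1(r))$ to be better or asymptotically equal to one with $(d_2(r),c_2(r))$ exactly when $d_1(r)\geq d_2(r)$ and $c_1(r)\leq c_2(r)$, i.e.\ when it delivers at least as much diversity at no larger complexity exponent. With that convention in place, the whole content of the lemma reduces to showing that, for every fixed lattice code $\X_\Lambda$ and every $r$, the pair $\D_\textsc{lrr},\P_\textsc{lr}$ sits simultaneously at the top of the first coordinate and at the bottom of the second, hence dominates every competing $\D,\P$ in this order.

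First I would invoke Lemma~\ref{lem:LRRoptimalityComp}, which already gives $c_{\X_\Lambda,\D_\textsc{lrr},\P_\textsc{lr}}(r)=0$ and identifies $0$ as the infimum of the complexity exponent over \emph{all} encoders, decoders and policies. Since any reasonable transceiver obeys $0\leq c_{\X,\D,\P}(r)\leq rT$ (as recalled in Section~\ref{sec:exponents}), it follows in particular that $c_{\X_\Lambda,\D_\textsc{lrr},\P_\textsc{lr}}(r)=0\leq c_{\X_\Lambda,\D,\P}(r)$ for every $\D,\P$, and therefore that $c_{\X_\Lambda,\D_\textsc{lrr},\P_\textsc{lr}}(r)=\inf_{\D,\P}c_{\X_\Lambda,\D,\P}(r)$. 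Second I would invoke \eqref{eq:LRRptimalityDMT}, the DMT-optimality statement carried over from \cite{JE:09b}, which gives $d_{\X_\Lambda,\D_\textsc{lrr},\P_\textsc{lr}}(r)=\sup_{\D,\P}d_{\X_\Lambda,\D,\P}(r)\geq d_{\X_\Lambda,\D,\P}(r)$ for every $\D,\P$. Combining the two displays yields at once the pointwise domination that defines ``better or asymptotically equal'' and the claimed identity $(d_{\X_\Lambda,\D_\textsc{lrr},\P_\textsc{lr}}(r),c_{\X_\Lambda,\D_\textsc{lrr},\P_\textsc{lr}}(r))=(\sup_{\D,\P}d_{\X_\Lambda,\D,\P}(r),\inf_{\D,\P}c_{\X_\Lambda,\D,\P}(r))$, which is \eqref{eq:LRRptimalityDMCT}.

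The only point that could a priori fail, and hence the conceptual crux, is that the supremizing $\D,\P$ for diversity and the minimizing $\D,\P$ for complexity might be incompatible, so that the componentwise-optimal corner $(\sup d,\inf c)$ would lie strictly outside the achievable $(d,c)$ region and one would be left with a genuine tradeoff curve rather than a single dominating point. What makes the lemma go through is precisely that this tension has already been resolved upstream: the analysis of $\P_\textsc{lr}$ shows the halting threshold \eqref{eq:LRboundFlops} can be tuned so that the halting event is asymptotically no more likely than the ML error event (no DMT loss), while the LLL-cycle bound \eqref{eq:LLL-bound} together with that same threshold caps the worst-case flop count at $O(\log\rho)$ (a zero complexity exponent). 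I therefore do not expect any real obstacle in the proof of this particular lemma; the delicate coexistence of DMT-optimality with a vanishing complexity exponent for $\D_\textsc{lrr},\P_\textsc{lr}$ is inherited from Lemma~\ref{lem:LRRoptimalityComp}, from \eqref{eq:LRRptimalityDMT}, and from the cited references, and the present argument is just the bookkeeping that assembles them, plus a careful statement of the ordering convention so that the word ``better'' in the lemma is unambiguous.
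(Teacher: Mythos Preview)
Your proposal is correct and matches the paper's own approach: the paper states the lemma as an immediate consequence of combining \eqref{eq:LRRptimalityComp} (Lemma~\ref{lem:LRRoptimalityComp}) with \eqref{eq:LRRptimalityDMT}, which is exactly what you do. Your additional remarks on the partial order and on why the sup and inf are simultaneously attained are helpful clarifications, but the underlying argument is the same bookkeeping step the paper uses.
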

Here it is stressed that this achievable tradeoff may be suboptimal, as it is limited by the reliability capabilities of the specific code $\X_\Lambda$.

What remains now is to combine the optimal components $\D_\textsc{lrr},\P_\textsc{lr}$, with suitable code designs.
\subsubsection{Employing DMT optimal codes, to meet the rate-reliability-complexity tradeoff}
We have just seen in Lemma~\ref{lem:LRRoptimalityDMCT} that, given any lattice design $\X_\Lambda$, the combination $\D_\text{LRR},\P_\text{LR}$ achieves the highest allowable tradeoff over any transceiver-policy that includes $\X_\Lambda$.  Consequently what remains is to identify \emph{lattice }code designs that optimize both $c_{\X_\Lambda,\D,\P}(r)$ and $d_{\X_\Lambda,\D,\P}(r)$, in the presence of $\D_\text{LRR},\P_\text{LR}$.  Optimizing of $c_{\X_\Lambda,\D,\P}(r)$ has already been achieved in Lemma~\ref{lem:LRRoptimalityComp} which proved that any lattice design $\X_\Lambda$ gives $c_{\X_\Lambda,\D_\textsc{lrr},\P_\textsc{lr}}=\inf\limits_{\X,\D,\P}c_{\X,\D,\P} =0.$
Hence what remains is to find a lattice design that optimizes $d_{\X_\Lambda,\D,\P}(r)$, in the presence of $\D_\text{LRR},\P_\text{LR}$.  This in turn is further simplified in the presence of $\eqref{eq:LRRptimalityDMT}$, and the task is now limited to simply finding DMT optimal lattice codes, i.e., codes that asymptotically meet the outage region
\beqn
\mathcal{O} = \{\bfH: \frac{1}{T}\log\det\bigl( I + \beta \bfH\bfH^{\dag} \bigr)<R\},  \ \text{some fixed} \ \beta,
\eeqn
of the equivalent MIMO channel to achieve asymptotically optimal performance (cf.~\cite{ZT:03})\beq \label{eq:DMToutage} d_{\text{opt}}(r) := \sup\limits_{\XDP}d_{\XDP}(r) = P(\bfH\in \mathcal{O}).\eeq  The existence of such lattice codes has been proven in~\cite{GCD:04}, for the quasi-static Rayleigh fading channel, and a unified family of such codes was explicitly constructed in \cite{EKP:06} using \emph{cyclic division algebras} (CDA).  Further such codes have, over the last few years, been described for a plethora of MIMO models.  These codes are based on different variants of CDA codes (cf. \cite{SRS:03},\cite{BR:03}), and have been shown, under basic continuity conditions, to provide DMT optimality for all channel dimensions, and most often for all fading statistics.  Such codes can, for example, be found in \cite{EKP:06,YB:07a,Lu:08,KC:09,EK:09,LH:09,EVA:09,PKE:09}, and they DMT-optimally apply to several MIMO scenarios, including \textsc{mimo, mimo-ofdm, mimo-mac} (Rayleigh fading), \textsc{mimo-arq}, as well as to most existing cooperative communication protocols.

For all the above MIMO scenarios, we have now the final result, which holds under basic continuity conditions.

\subsection{The optimal tradeoff}

\begin{theorem} \label{thm:optimalPair}
The high-SNR optimal, over all encoders, decoders and policies, rate-reliability-complexity behavior is given by
\begin{multline}
\label{eq:optimalityDMCT}
\text{opt}_{\XDP} \bigl(d_{\XDP}(r),c_{\XDP}(r)\bigr) \\
= \bigl(d_{\X_\text{CDA},\D_\text{LRR},\P_\text{LR}}(r),c_{\X_\text{CDA},\D_\text{LRR},\P_\text{LR}}(r)\bigr) = \bigl(d_\text{opt}(r),0\bigr)
\end{multline}
and is achieved for all multiplexing gains, all channel dimensions and (in most known cases) all fading statistics, by the CDA-based designs $\X_\textsc{cda}$, the LR-aided regularized linear decoder $\D_\textsc{lrr}$, and the LR-based policy $\P_\textsc{lr}$.
\end{theorem}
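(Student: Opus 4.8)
The plan is to assemble Theorem~\ref{thm:optimalPair} from the lemmas already established, the classical DMT converse of \cite{ZT:03}, and the known existence of DMT-optimal CDA lattice codes, and then to verify that the two coordinates of the tradeoff can be optimized by one and the same transceiver. First I would prove the converse, i.e.\ the two componentwise outer bounds. On the complexity side, any reasonable transceiver must expend at least a fixed, $\rho$-independent number of flops per decoded codeword, so $C_{\XDP}$ cannot vanish and $c_{\XDP}(r)\geq 0$ for every $\XDP$; hence $0$ is the smallest attainable complexity exponent. On the reliability side, I would invoke the outage bound: for any encoder, any decoder and any policy, $P_{\XDP}$ is asymptotically no smaller than the outage probability $P(\bfH\in\mathcal{O})$ of the equivalent MIMO channel, so by \eqref{eq:DMToutage} we have $d_{\XDP}(r)\leq d_{\text{opt}}(r)$. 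Consequently no $\XDP$ can beat $d_{\text{opt}}(r)$ in diversity or $0$ in complexity exponent, so $\bigl(d_{\text{opt}}(r),0\bigr)$ is an outer bound, in both coordinates, on $\text{opt}_{\XDP}\bigl(d_{\XDP}(r),c_{\XDP}(r)\bigr)$.

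Second, I would show this corner is achieved by the single triple $\X_\textsc{cda},\D_\textsc{lrr},\P_\textsc{lr}$. Fix a CDA-based lattice code $\X_\textsc{cda}$; by \cite{GCD:04,EKP:06} and the subsequent family of constructions \cite{YB:07a,Lu:08,KC:09,EK:09,LH:09,EVA:09,PKE:09}, such a code is DMT optimal under the stated continuity conditions, i.e.\ it admits some decoder-policy with diversity $d_{\text{opt}}(r)$, so $\sup_{\D,\P}d_{\X_\textsc{cda},\D,\P}(r)=d_{\text{opt}}(r)$. By \eqref{eq:LRRptimalityDMT}, the decoder-policy $\D_\textsc{lrr},\P_\textsc{lr}$ achieves $\sup_{\D,\P}d_{\X_\Lambda,\D,\P}(r)$ for \emph{every} lattice code, hence applied to $\X_\textsc{cda}$ it attains exactly $d_{\text{opt}}(r)$; and by Lemma~\ref{lem:LRRoptimalityComp}, the same triple has $c_{\X_\textsc{cda},\D_\textsc{lrr},\P_\textsc{lr}}=0$. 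Combining with Lemma~\ref{lem:LRRoptimalityDMCT}, the pair delivered by $\X_\textsc{cda},\D_\textsc{lrr},\P_\textsc{lr}$ is $\bigl(d_{\text{opt}}(r),0\bigr)$, which matches the outer bound of the previous step; therefore it is optimal and \eqref{eq:optimalityDMCT} holds.

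The main obstacle is conceptual rather than computational: a priori the two coordinates could be in tension, since extracting the full outage-limited diversity $d_{\text{opt}}(r)$ might appear to demand a complexity exponent as large as $rT$, as with brute-force ML decoding. This is precisely what Lemma~\ref{lem:LRRoptimalityDMCT} rules out --- the regularized, LR-aided linear decoder together with the condition-number-based halting policy $\P_\textsc{lr}$ already attains $\sup_{\D,\P}d_{\X_\Lambda,\D,\P}(r)$ at a zero complexity exponent, because, via the LLL-cycle bound \eqref{eq:LLL-bound} and the threshold \eqref{eq:LRboundFlops}, the halting event is made asymptotically rarer than the ML error event. Thus the only remaining design freedom, the code, can be spent entirely on maximizing reliability, and CDA codes do exactly that. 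A secondary point to treat carefully is that the identity $\sup_{\XDP}d_{\XDP}(r)=d_{\text{opt}}(r)$, as well as the DMT optimality of $\D_\textsc{lrr}$ and of the CDA designs, hold under the basic continuity assumptions invoked throughout and for the channel dimensions and fading statistics for which such DMT-optimal CDA constructions are known --- which is why the theorem is stated to hold for all multiplexing gains and channel dimensions, and in most known cases for all fading statistics.
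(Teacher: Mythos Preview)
Your proposal is correct and follows essentially the same route as the paper: the paper does not give a separate proof block for Theorem~\ref{thm:optimalPair} but builds it exactly as you do, from Lemma~\ref{lem:LRRoptimalityComp}, equation~\eqref{eq:LRRptimalityDMT}, Lemma~\ref{lem:LRRoptimalityDMCT}, and the existence of DMT-optimal CDA lattice codes. One small remark: the diversity converse $d_{\XDP}(r)\le d_{\text{opt}}(r)$ is in the paper simply the definition \eqref{eq:DMToutage}, $d_{\text{opt}}(r):=\sup_{\XDP}d_{\XDP}(r)$, so you need not invoke the outage bound separately for that direction.
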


Equivalently the result shows that the achievable rate-reliability-complexity combination
\beq \label{eq:bestPair}\bigl(\rho, R=r\log\rho, P\doteq\rho^{-d_\text{opt}(r)},C\doteq\rho^0\bigr)\eeq is optimal, up to a factor that asymptotically becomes smaller than any $\rho^\epsilon$, for any $\epsilon>0.$.  We quickly note that $\X_\text{CDA},\D_\text{LRR},\P_\text{LR}$ is currently the only known tradeoff-optimal design.

Directly from the above, we have the following.
\begin{corollary}
\label{thm:achievableRegion}
In the high SNR regime, the union of all achievable \textsc{snr}-rate-reliability-complexity combinations, considering all reasonable $\XDP$, is given by
\begin{multline*}\{\bigl(\rho, R=r\log\rho, P\doteq\rho^{-d(r)},C\doteq\rho^{c(r)}\bigr)\}, \\ \ 0\leq d(r)\leq d_\text{opt}(r), 0\leq c(r)\leq rT.\end{multline*}
\end{corollary}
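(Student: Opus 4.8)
The plan is to establish Corollary~\ref{thm:achievableRegion} as an essentially immediate consequence of Theorem~\ref{thm:optimalPair} together with the definitions of the exponents in Section~\ref{sec:exponents}. The statement is a characterization of a union of achievable points, so the proof naturally splits into an \emph{outer bound} (every achievable $\XDP$ lands inside the claimed region) and an \emph{achievability} direction (every point in the region is attained by some reasonable $\XDP$).

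For the outer bound, I would first note that $d(r) \geq 0$ trivially, since $P_{\XDP} \leq 1$ implies $\log P_{\XDP}/\log\rho \leq 0$, and that $d(r) \leq d_\text{opt}(r)$ is exactly the optimality half of \eqref{eq:DMToutage}, i.e.\ $d_\text{opt}(r) = \sup_{\XDP} d_{\XDP}(r)$. For the complexity coordinate, $c(r) \geq 0$ follows because any transceiver must perform at least a constant number of flops, and $c(r) \leq rT$ is the range statement already argued in Section~\ref{sec:exponents}: any ``reasonable'' $\XDP$ either obeys $c_{\XDP}(r) \leq rT$ directly, or — by the footnote argument — can be replaced by a lookup-table encoder plus unrestricted ML decoder whose joint worst-case complexity is $\doteq |\X| \doteq \rho^{rT}$, hence $c(r) = rT$. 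So the union is contained in the stated box.

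For achievability, I would argue that every $(d(r), c(r))$ with $0 \leq d(r) \leq d_\text{opt}(r)$ and $0 \leq c(r) \leq rT$ is realized. The corner cases anchor the construction: Theorem~\ref{thm:optimalPair} gives the point $(d_\text{opt}(r), 0)$ via $\X_\text{CDA}, \D_\text{LRR}, \P_\text{LR}$, while the full-search uninterrupted ML decoder on a CDA code gives a point with $c(r) = rT$ (and diversity $d_\text{opt}(r)$). Intermediate complexity exponents $c(r) \in (0, rT)$ are obtained by a partial/truncated search policy $\P$ that is allowed $\rho^{c(r)}$ flops before halting; intermediate diversity values $d(r) < d_\text{opt}(r)$ are obtained either by weakening the code (e.g.\ a code with suboptimal DMT) or by a more aggressive halting policy whose halting probability decays only as $\rho^{-d(r)}$, which by the exponent characterization \eqref{eq:complexityExponent} pins the effective error exponent at $d(r)$. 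Since $c_{\XDP}(r)$ and $d_{\XDP}(r)$ can be tuned independently in this way — the complexity budget set by $\P$ and the diversity set jointly by $\X$ and by the halting threshold — every interior point is covered, and the union equals the claimed region.

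The main obstacle is the achievability direction, specifically making precise what ``all reasonable $\XDP$'' means and exhibiting, for a genuinely arbitrary target pair $(d(r), c(r))$ in the interior, a concrete transceiver-policy family that hits it exactly rather than merely dominates or is dominated by it. The cleanest route is to invoke the exponent characterization of $c(r)$ quoted just before Lemma~\ref{lem:LRRoptimalityComp}, which already decouples the two exponents by design: given any code with DMT curve at least $d(r)$, choosing the flop budget $\rho^{c(r)}$ and a halting rule whose halt-probability exponent equals $d(r)$ yields precisely the pair $(d(r), c(r))$. I would therefore lean on that characterization, plus the existence of DMT-optimal CDA codes and the monotonicity of diversity under code restriction, rather than attempting an explicit flop count for each target point; the remaining bookkeeping (verifying the $\doteq$ scalings survive the $\rho \to \infty$ limit) is routine.
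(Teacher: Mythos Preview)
Your proposal is correct and follows essentially the same approach as the paper: the outer bound is immediate from the definitions of $d_{\text{opt}}(r)$ and of ``reasonable'' transceivers, and achievability of every interior point is obtained by starting from the optimal design $\X_{\textsc{cda}},\D_{\textsc{lrr}},\P_{\textsc{lr}}$ of Theorem~\ref{thm:optimalPair} and modifying the policy $\P$ to inject the desired amount of extra complexity and extra errors. The paper states this in a single sentence (and relegates the ``worst-case'' corner $(0,rT)$ to a footnote), whereas you have spelled out both directions and sketched the policy modifications explicitly; your alternative of weakening the code is unnecessary, since padding $\P$ alone already reaches every target pair.
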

\begin{proof}
For a given $R$, any of the above reliability-complexity pairs can be achieved by employing an $\XDP$ that is optimal with respect to \eqref{eq:bestPair}, modifying though $\P$ to introduce the appropriate amount of extra complexity and errors\footnote{Constructing such modification is trivial.  We note that the worst case ($r,d(r)=0, c(r)=rT$) corresponds to a full-search transceiver that provides subexponential decay of the probability of error, for increasing SNR.
}.
\end{proof}

Finally, using the fact that the complexity of the optimal transceiver is $O(\log\rho)$, it is easy to show that for several MIMO settings, optimal DMT performance is achieved with at most $O(n^2)$ flops per bit.

\paragraph{Optimal limits on general reliability-complexity functions}
Another measure of the rate-reliability-complexity capabilities of different transceivers can take the form of general utility functions.  Towards this we define the following.
\begin{definition}
Let $\Gamma$ be a weighting function that is increasing in $d_{\XDP}(r)$, decreasing in $c_{\XDP}(r)$, and which reflects the different costs assigned separately to erroneous detection, and complexity.
Then we use \beq \label{eq:dmct1}D_{\XDP}(r):=\Gamma\bigl(d_{\XDP}(r),c_{\XDP}(r)\bigr),\eeq
to denote the \emph{$\Gamma$-general rate-reliability-complexity limit}, for a given $\XDP$.
\end{definition}

Towards motivating meaningful use of the limit, we identify the following simple manifestation as one of many special cases of the general limit.
\begin{definition}
The \emph{homogeneous rate-reliability-complexity limit} for a given $\XDP$, and a given weighting factor $\gamma\geq 0$,
takes the form
\beq \label{eq:dmctDiff}D_{\XDP}(r):=d_{\XDP}(r)-\gamma c_{\XDP}(r),\eeq
and describes the diversity gain minus the normalized complexity cost.
\end{definition}

It is interesting to interpret the rate-reliability-complexity limit
$D_{\XDP}(r)$, as a limit that describes the high-SNR error capabilities of $\XDP$, per unit power and complexity. Equivalently, the limit may be described as a measure of diversity gain per complexity order.

The following result, which holds under basic continuity conditions, for the same scenarios as Theorem~\ref{thm:optimalPair}, describes the optimizing value achieved by a large family of measures~$\Gamma$.

\begin{corollary}\label{thm:OptimalUtility}
The optimal, over all encoders, decoders and policies, $\Gamma$-general rate-reliability-complexity limit $D(r)$, is given by
\bea
\label{eq:optimalityDMCT}
D(r) = \Gamma\bigl(\sup\limits_{\XDP} d_{\XDP}(r),0\bigr) = \Gamma\bigl(d_\text{opt}(r),0\bigr)
\eea
and is achieved for all multiplexing gains, and all channel dimensions by the CDA-based designs $\X_\textsc{cda}$, the LR-aided regularized linear decoder $\D_\textsc{lrr}$, and the LR-based policy $\P_\textsc{lr}$.
\end{corollary}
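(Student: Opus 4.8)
The plan is to obtain the corollary as an immediate consequence of Theorem~\ref{thm:optimalPair} together with nothing more than the stipulated monotonicity of $\Gamma$. The essential observation is that no genuine Pareto negotiation between reliability and complexity has to be carried out here: Theorem~\ref{thm:optimalPair} already certifies that the two component-wise extremes, namely maximal diversity $d_\text{opt}(r)$ and minimal complexity exponent $0$, are attained \emph{jointly} by a single $\XDP$, so the two monotonicities of $\Gamma$ can be invoked one coordinate at a time.

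First I would prove the converse (outer) bound. For every reasonable $\XDP$, \eqref{eq:DMToutage} gives $d_{\XDP}(r)\le d_\text{opt}(r)$, while the admissible complexity range recalled before \eqref{eq:doteq} (together with Lemma~\ref{lem:LRRoptimalityComp}) gives $c_{\XDP}(r)\ge 0$. Since $\Gamma$ is increasing in its first argument and decreasing in its second, applying these two inequalities successively yields
\[ D_{\XDP}(r) = \Gamma\bigl(d_{\XDP}(r),c_{\XDP}(r)\bigr) \le \Gamma\bigl(d_\text{opt}(r),c_{\XDP}(r)\bigr) \le \Gamma\bigl(d_\text{opt}(r),0\bigr), \]
and hence $\sup_{\XDP} D_{\XDP}(r) \le \Gamma\bigl(d_\text{opt}(r),0\bigr)$.

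Next I would establish achievability and conclude. By Theorem~\ref{thm:optimalPair}, the triple $\X_\textsc{cda},\D_\textsc{lrr},\P_\textsc{lr}$ realizes $\bigl(d_{\XDP}(r),c_{\XDP}(r)\bigr)=\bigl(d_\text{opt}(r),0\bigr)$ for all multiplexing gains and all channel dimensions (and, in most known cases, all fading statistics), so that $D_{\X_\textsc{cda},\D_\textsc{lrr},\P_\textsc{lr}}(r)=\Gamma\bigl(d_\text{opt}(r),0\bigr)$, meeting the outer bound. This simultaneously pins the optimal value to $D(r)=\Gamma\bigl(d_\text{opt}(r),0\bigr)$ and identifies the design attaining it. As a consistency check one recovers the homogeneous limit as the instance $\Gamma(d,c)=d-\gamma c$, for which the argument specializes to $D_{\XDP}(r)\le d_\text{opt}(r)-\gamma\cdot 0 = d_\text{opt}(r)$, again met with equality by $\X_\textsc{cda},\D_\textsc{lrr},\P_\textsc{lr}$.

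I do not expect a substantive obstacle. The only delicate point is that the monotonicity of $\Gamma$ is exploited separately in each of its two arguments, which is legitimate precisely because the joint optimality furnished by Theorem~\ref{thm:optimalPair} decouples the reliability and complexity coordinates; were the extremal $d$ and $c$ not simultaneously achievable, one would instead have to reason over the whole achievable region of Corollary~\ref{thm:achievableRegion} and would need a joint monotonicity hypothesis on $\Gamma$. A minor bookkeeping point is to confine the optimization to ``reasonable'' $\XDP$ with $0\le c_{\XDP}(r)\le rT$, so that $c_{\XDP}(r)\ge 0$ holds and $\Gamma$ is evaluated on its intended domain; the look-up-table substitution noted near \eqref{eq:doteq} shows this restriction is without loss of generality.
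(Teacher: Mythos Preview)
Your proposal is correct and follows essentially the same approach as the paper: bound $\sup_{\XDP}\Gamma(d_{\XDP}(r),c_{\XDP}(r))$ above by $\Gamma(\sup_{\XDP}d_{\XDP}(r),\inf_{\XDP}c_{\XDP}(r))$ via the monotonicity of $\Gamma$, and then invoke Theorem~\ref{thm:optimalPair} for achievability. The paper's proof is just the terse two-line version of your argument, without the explicit coordinate-by-coordinate chain or the commentary on the role of joint achievability.
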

\begin{proof}
The proof is direct by noting that
\bea \label{eq:measure2}
D(r)&=&\sup\limits_{\XDP}\Gamma\bigl(d_{\XDP}(r),c_{\XDP}(r)\bigr) \\ &\leq& \Gamma\bigl(\sup\limits_{\XDP}d_{\XDP}(r),\inf\limits_{\XDP}c_{\XDP}(r)\bigr),\eea and then by applying Theorem~\ref{thm:optimalPair}.
\end{proof}
The following holds for the more intuitive, cost-symmetric version of the limit.
\begin{corollary}
The optimal, over all encoders, decoders and policies, homogeneous rate-reliability-complexity limit, is given by
\beq
\label{eq:optimalityDMCT}
D(r) = \sup\limits_{\XDP} d_{\XDP}(r)-\gamma c_{\XDP}(r)=d_\text{opt}(r).
\eeq
\end{corollary}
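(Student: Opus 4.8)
The plan is to derive this corollary as an immediate specialization of Corollary~\ref{thm:OptimalUtility}. The homogeneous limit \eqref{eq:dmctDiff} is simply the choice $\Gamma(d,c) = d - \gamma c$, so I first verify that this particular $\Gamma$ satisfies the hypotheses required of a $\Gamma$-general weighting function: it is strictly increasing in $d_{\XDP}(r)$ and, since $\gamma \geq 0$, non-increasing in $c_{\XDP}(r)$, and it separates the cost of erroneous detection from the cost of complexity additively. Hence Corollary~\ref{thm:OptimalUtility} applies verbatim.

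Next I would instantiate the conclusion \eqref{eq:optimalityDMCT} of Corollary~\ref{thm:OptimalUtility} with this $\Gamma$. That gives the optimal value $\Gamma\bigl(d_\text{opt}(r),0\bigr) = d_\text{opt}(r) - \gamma \cdot 0 = d_\text{opt}(r)$, and identifies the achieving design as $\X_\textsc{cda}, \D_\textsc{lrr}, \P_\textsc{lr}$, exactly as in the parent corollary. The supremum expression $\sup_{\XDP} \bigl(d_{\XDP}(r) - \gamma c_{\XDP}(r)\bigr)$ in the statement is just the left-hand side of \eqref{eq:measure2} written out for this $\Gamma$, so the chain of reasoning in the proof of Corollary~\ref{thm:OptimalUtility} transfers directly: the separable structure of $\Gamma$ makes the inequality $\sup_{\XDP}\Gamma(d_{\XDP},c_{\XDP}) \leq \Gamma(\sup_{\XDP} d_{\XDP}, \inf_{\XDP} c_{\XDP})$ in fact an equality here, because the pair $(d_\text{opt}(r),0)$ is simultaneously achieved by a single design per Theorem~\ref{thm:optimalPair}, so no tension between the two objectives arises.

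Concretely, the proof is one or two lines: observe that \eqref{eq:dmctDiff} is the case $\Gamma(x,y)=x-\gamma y$ of Definition, note this $\Gamma$ is admissible ($\gamma\geq 0$), and invoke Corollary~\ref{thm:OptimalUtility} to conclude $D(r)=d_\text{opt}(r)-\gamma\cdot 0=d_\text{opt}(r)$, attained by $\X_\textsc{cda},\D_\textsc{lrr},\P_\textsc{lr}$. I do not anticipate a genuine obstacle; the only point needing a word of care is that $\gamma$ is a fixed nonnegative constant rather than a function of $r$ or $\rho$, so the monotonicity requirement on $\Gamma$ holds uniformly and the optimization over $\XDP$ commutes with the evaluation of $\Gamma$ at the jointly optimal point $(d_\text{opt}(r),0)$. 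Thus:
\begin{proof}
The homogeneous limit \eqref{eq:dmctDiff} is the special case $\Gamma(d,c)=d-\gamma c$ of \eqref{eq:dmct1}, which for $\gamma\geq 0$ is increasing in $d_{\XDP}(r)$ and decreasing in $c_{\XDP}(r)$, hence admissible. Applying Corollary~\ref{thm:OptimalUtility} yields
\[
D(r)=\Gamma\bigl(d_\text{opt}(r),0\bigr)=d_\text{opt}(r)-\gamma\cdot 0=d_\text{opt}(r),
\]
achieved by $\X_\textsc{cda},\D_\textsc{lrr},\P_\textsc{lr}$.
\end{proof}
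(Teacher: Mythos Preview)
Your proposal is correct and matches the paper's intended argument: the paper presents this corollary without an explicit proof, as an immediate specialization of Corollary~\ref{thm:OptimalUtility} obtained by taking $\Gamma(d,c)=d-\gamma c$. The only minor quibble is that for $\gamma=0$ this $\Gamma$ is constant rather than strictly decreasing in $c$, but the conclusion $D(r)=\sup_{\XDP}d_{\XDP}(r)=d_\text{opt}(r)$ is then trivially direct, so nothing is lost.
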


\section{Conclusions\label{sec:conclussions}}

The tradeoff and its achievability, provide worst-case guarantees on the complexity required for provably optimal performance in outage-limited MIMO communications.  The guarantees hold over a surprisingly broad setting, and they come with reduced transmission energy and delay, as well as reduced algorithmic power consumption and hardware.
The tradeoff concisely quantifies these guarantees and the capabilities of different transceivers, as well as quantifies the role of policies in simplifying algorithms which would otherwise introduce unbounded complexity.

\bibliographystyle{IEEEtran}
\bibliography{IEEEabrv,refs}

\end{document}